\documentclass[twoside]{article}
\usepackage{color}
\usepackage{amssymb}
\pagestyle{myheadings}
\usepackage{dsfont}
\usepackage{amssymb,amsmath,ulem,cancel,graphicx,enumerate}

\usepackage[sort&compress,square,comma,authoryear]{natbib}

\usepackage[colorlinks=true,urlcolor=blue,citecolor=blue,linkcolor=blue,bookmarks=true]{hyperref}

\pagestyle{myheadings}\markboth{\small \sc Brigo, Buescu, Rutkowski}{\small \sc
	Funding, repo and credit inclusive valuation as modified option pricing
	}
\textwidth 6 in
\oddsidemargin .45 in
\evensidemargin  -.1 in
\evensidemargin  .45 in
\topmargin 0 in
\baselineskip = 13 pt
\hfuzz 30 pt
\parindent  15 pt
\parskip = 4 pt

\numberwithin{equation}{section}

\newtheorem{theorem}{Theorem}[section]
\newtheorem{lem}{Lemma}[section]
\newtheorem{pro}{Proposition}[section]
\newtheorem{cor}{Corollary}[section]
\newtheorem{rem}{Remark}[section]
\newtheorem{rems}{Remarks}[section]
\newtheorem{ex}{Example}[section]
\newtheorem{defi}{Definition}[section]
\newtheorem{hyp}{Assumption}[section]
\newtheorem{con}{Conjecture}[section]

\newcommand{\sect}{\section}
\newcommand{\ssc}{\subsection}


\newcommand{\bt}{\begin{theorem}}
\newcommand{\et}{\end{theorem}}
\newcommand{\bl}{\begin{lem}}
\newcommand{\el}{\end{lem}}
\newcommand{\bp}{\begin{pro}}
\newcommand{\ep}{\end{pro}}
\newcommand{\bcor}{\begin{cor}}
\newcommand{\ecor}{\end{cor}}
\newcommand{\bcon }{\begin{con} \rm }
\newcommand{\econ }{\end{con}}
\newcommand{\lab }{\label }
\newcommand{\bd}{\begin{defi} \rm }
\newcommand{\ed}{\end{defi}}
\newcommand{\brem }{\begin{rem} \rm }
\newcommand{\erem }{\end{rem}}
\newcommand{\brems }{\begin{rems} \rm }
\newcommand{\erems }{\end{rems}}
\newcommand{\bhyp }{\begin{hyp} \rm }
\newcommand{\ehyp }{\end{hyp}}
\newcommand{\bex}{\begin{ex} \rm }
\newcommand{\eex}{\end{ex}}
\newcommand{\be}{\begin{equation}}
\newcommand{\ee}{\end{equation}}
\newcommand{\bde}{\begin{displaymath}}
\newcommand{\ede}{\end{displaymath}}
\newcommand{\beq}{\begin{eqnarray*}}
\newcommand{\eeq}{\end{eqnarray*}}
\newcommand{\beqa}{\begin{eqnarray}}
\newcommand{\eeqa}{\end{eqnarray}}
\newcommand{\bea}{\begin{align*}}
\newcommand{\eea}{\end{align*}}

\def\I{\mathds{1}}

\def\wt{\widetilde}
\def\tilde{\widetilde}
\def\phi{\varphi }

\def\ff{{\mathbb F}}
\def\hh{{\mathbb H}}
\def\gg{{\mathbb G}}

\def\H{{\cal H}}

\def\P{\mathbb P}
\def\Q{\mathbb Q}
\def\EQ{{\mathbb E}_{{\mathbb Q}}}
\def\Qhh{{\mathbb Q}^h}
\def\Ehh{{\mathbb E}^h}
\def\Qhf{{\mathbb Q}^{\beta }}
\def\Ehf{{\mathbb E}^{\beta}}

\newcommand{\Keywords}[1]{\par\noindent{\small{\bf Keywords\/}: #1}}
\newcommand{\Class}[1]{\par\noindent{\small{\bf Mathematics Subjects Classification (2010)\/}: #1}}


\title{{\LARGE \bf 
		Funding, repo and credit inclusive valuation as modified option pricing
		\footnote{The research of Damiano Brigo and Marek Rutkowski was supported by the EPSRC Mathematics Platform Grant EP/I019111/1 {\it Mathematical Analysis of Funding Costs} at Imperial College London.}}\vskip 80 pt}

\author{Damiano Brigo
	\\Dept. of Mathematics \\ Imperial College London \and Cristin Buescu \\ Dept. of Mathematics \\ King's College London \\ \and Marek Rutkowski\footnote{
		The research of Marek Rutkowski was supported by the DVC Bridging Support Grant {\it BSDEs Approach to Models with Funding Costs} at the University of Sydney.} \\ School of Mathematics \\ and Statistics \\ University of Sydney}

\date{\vskip 30 pt \today} 

\begin{document}
\maketitle
\vskip 30 pt
\begin{abstract}
	We take the holistic approach of computing an OTC claim value that incorporates credit and funding liquidity risks and their interplays, instead of forcing individual price adjustments: CVA, DVA, FVA, KVA. The resulting nonlinear mathematical problem features semilinear PDEs and FBSDEs. We show that for the benchmark vulnerable claim there is an analytical solution, and we express it in terms of the Black-Scholes formula with dividends. This allows for a detailed valuation analysis, stress testing and risk analysis via sensitivities. 
\vskip 30 pt
\Keywords{Funding costs, counterparty risk, credit risk, repo market, valuation adjustments, hedging
}
\vskip 20 pt
\Class{91G40,$\,$60J28}
\end{abstract}

\newpage

\sect{Introduction}  \label{sec1}
Prior to the financial crisis of 2007-2008, institutions tended to ignore the credit risk
of highly-rated counterparties in valuing and hedging contingent claims traded
over-the-counter (OTC), claims which are in fact bilateral contracts negotiated between
two default-risky entities. Then, in just the short span of one month of 2008 (Sep 7 to Oct 8),
eight mainstream financial institutions experienced critical credit events in a painful
reminder of the default-riskiness of even large names (the eight were
Fannie Mae, Freddie Mac, Lehman Brothers, Washington Mutual, Landsbanki, Glitnir and Kaupthing, to which we could also add Merrill Lynch).

One of the explosive manifestations
of this crisis was the sudden 
divergence
between the rate of overnight indexed swaps (OISs) and the LIBOR rate, pointing to the
credit and liquidity risk existing in the interbank market. This forced dealers and financial institutions to
reassess the valuation of OTC claims, leading to various adjustments to their book value.

It is difficult to do justice to the entire literature on such valuation adjustments,
which  intertwines two strands that have been developed in parallel by academics and practitioners. For a full introduction to valuation adjustments and all related references we refer to the first chapter of either \citet{BrigoMoriniPallavicini2012} or \citet{CrepeyBook}.  Here we will summarize only the features that are most relevant for this work.

Firstly, the credit valuation adjustment (CVA) corrects the value of a trade with the expected costs borne by
the dealer due to scenarios where the counterparty may default. CVA had been around
for some time (see, for example, \citet{BrigoMas}) and in its most sophisticated version can include credit migration and ratings transition (see, for example, \citet{BR14}). 
Secondly, the debit valuation adjustment (DVA), which 
is simply the CVA seen from the other side of the trade, corrects the price with the expected benefits to the dealer due to scenarios where the dealer might default before the end of the trade. This latter correction may lead to a controversial profit that can be booked when the credit quality of the dealer deteriorates. For example, Citigroup reported in a press release on the first quarter revenues of 2009 that ``Revenues also included [...] a net \$2.5 billion positive CVA on derivative positions, excluding monolines, mainly due to the widening of Citi's CDS spreads".
Accounting standards by the FASB accept DVA, whereas the Basel Committee does not recognize it
in the risk measurement space; see, for example, \citet{BrigoMoriniPallavicini2012} for a detailed discussion. On top of this, DVA is very difficult to hedge, as this would involve selling
protection on oneself. The spread risk is therefore hedged via proxy hedging, trading in names that are thought to be correlated to one's own bank, but this does not help with the jump to default risk; see again  \citet{BrigoMoriniPallavicini2012} for a discussion. In real markets, CVA and DVA are often computed on large netting sets. It may thus be important to study such quantities for large portfolios; see, for example, \citet{bo}.

More recently, the funding valuation adjustment (FVA) was introduced. This is the price adjustment due to the
cost of funding the trade. Trading desks back the deal with a client by hedging it with
other dealers in the market, and this may involve maintaining a number of hedging accounts in the underlying assets, in cash,  or in other correlated assets when proxy-hedging.  Typically, the funds needed for these operations are raised from the internal treasury of the dealer, and ultimately come from external funders. Interest charges on all the related borrowing and lending activities need to be covered, and this is added to the valuation. Michael Rapoport reported in the Wall Street Journal, on Jan 14, 2014, that funding valuation adjustments cost J.P. Morgan Chase \$1.5 billion in the fourth quarter results. More recently, the capital valuation adjustment (KVA) has started being discussed for the cost of capital one has to set aside in order to be able to trade. We will not address KVA here, since even its very definition is currently subject to intense debate in the industry.

All such adjustments may concern both over the counter (OTC) derivatives trades and derivatives trades done through central 
clearing houses (CCP), see for example \citet{BP14} for a comparison of the two cases where the full mathematical structure of the problem of valuation under possibly asymmetric initial and variation margins, funding costs, liquidation delay and credit gap risk is explored, resulting in BSDEs and semilinear PDEs.
It is worth pointing out that the size of such derivatives markets remains quite relevant even post-crisis. 
At end of 2012, the market value of outstanding OTC derivative contracts  was reported to be \$24.7 trillion with \$632.6 trillion in notional value (BIS 2013). Even if many deals are now collateralized in an attempt to avoid CVA altogether, contagion and gap risk may still result in 
important residual CVA, as was shown for the case of credit default swap trades in \citet{BrigoMoriniPallavicini2012}.

As we mentioned above, the rigorous theory of valuation in presence of all such effects can be quite challenging, leading to models that are based on advanced mathematical tools such as semilinear PDEs or BSDEs, which make numerical analysis difficult and slow. See for example \citet{Elkaroui} for an example of how asymmetric interest rates, even in absence of credit risk, lead to BSDEs. The papers
\citet{BFP15} and \citet{bichuch} deal with the mathematical analysis of valuation equations in presence of all the abovementioned effects and risks, except KVA. \citet{biffis} analyzes such effects in the area of life insurance contracts, and longevity swaps in particular.

Isolating and computing each individual adjustment is difficult because there is a marked interplay between them in pricing. Therefore, the causes of these adjustments are accounted for at the level of the contract payoffs and the resulting all-inclusive price is written as a solution to an advanced mathematical problem of the type mentioned above. Is there a
case, even for a simple contract, where this all-inclusive price of an uncollateralized contract can be calculated analytically? We present here an answer in the affirmative.

More specifically, we show that for standard benchmark products the above mathematically challenging structures can be solved analytically under a few simplifying assumptions. The solution is expressed in terms of the same explicit formula used for standard derivatives in the absence of these adjustments,
namely the Black-Scholes pricing formula for vulnerable options on dividend-paying assets. This leads to a
closed-form solution for the all-inclusive price of a benchmark product, the vulnerable call option, which then enables an analysis of all such effects that is more approachable from a numerical point of view. This link with the all-familiar Black-Scholes formula may be a way to reach out to a large portion of market participants and traders that are often discouraged by the full mathematical complexity of nonlinear valuation and the related nonlinearity valuation adjustment (NVA), see \citet{nva}.

The paper is structured as follows.

Section \ref{sec2} investigates the replication of a defaultable bond using CDS contracts
written on the same name as the bond, with special emphasis on the assumptions
required for the default time. In Section \ref{sec3}, the replication approach is used to derive the PDE satisfied by the pre-default pricing function. We show that this PDE is equivalent to that obtained in \citet{BJR2005} using the martingale measure approach. The solution to this PDE is then expressed as the Black-Scholes price when the underlying stock pays dividends, but with appropriately chosen parameters reflecting the hedger's funding costs. In Section \ref{sec4}, the adjusted cash flow approach developed in \citet{PPB11},  \citet{BP14} and \citet{BFP15} 
is used to obtain the price of the same option via an expected value of adjusted discounted cash flows; we show that it leads to exactly the same expression in terms of the Black-Scholes formula with dividends as the replication/PDE approach of Section \ref{sec3}. The appeal of the Black-Scholes formula with dividends is its tractability, enabling the sensitivity analysis of the price in terms of funding and repo rates and credit spread, as outlined in Section \ref{sec5}.
 \section{Replication of a Defaultable Bond using CDS Contracts} \label{sec2}

In this preliminary section, we discuss the issue of valuation of a defaultable bond in the simple model
with funding account and traded CDSs. Special emphasis is put on
the mathematical assumptions underpinning commonly used replication arguments,
assumptions that are frequently ignored in the existing literature.

\ssc{Dynamics of the Defaultable Bond Price}

Assume that we want to replicate a zero-recovery defaultable bond in a
financial market with an unsecured funding account with rate
$f_t$  dubbed the treasury rate and a market CDS, which is traded at null price, on the company that issued the bond.
The premium leg the CDS is assumed to pay a constant, continuous in time market spread
$r^{CDS}$ and the protection leg pays one at the default of the bond
and nothing otherwise. Recall that the market spread is computed by equating the value
of the protection leg with the value of the premium leg.
As we shall show in Section \ref{default}, the present postulates regarding the
market spread may only hold under specific assumptions on the probability distribution of the default time under
the real-world probability.

The price process $B$ of the zero-recovery defaultable bond maturing at $T$ is given in terms of
the point process $J$, which jumps to one when default occurs and stays zero otherwise.
Specifically, we have
\[
B_t= \I_{\{J_t=0\}} \tilde{B}_t = \I_{\{\tau>t\}}\tilde{B}_t
\]
where the yet unspecified process $\tilde{B}$ represents the pre-default price of the bond.

We will now provide intuitive replication arguments leading to the dynamics of the bond price;
a more formal derivation is postponed to the next subsection. We assume here that there has been no default yet,
but it may happen with a positive probability between
the dates $t$ and $t+dt$ for an arbitrarily small time increment $dt$.
To show how to replicate a long position in the defaultable bond,
let us consider the transactions an investor enters into at time $t<\tau\wedge T$:
\begin{enumerate}
	\item borrow $\tilde{B}_t$ from the treasury and use it to buy one defaultable bond;
	\item buy a number $\tilde{B}_t$ of CDS contracts on the same name.
\end{enumerate}
We have established a long position in the defaultable bond, and
everything else forms the reverse of the replicating portfolio. Hence, formally, the replicating
portfolio consists of the short position in the CDS and the long position in the treasury.

We now look at investor's portfolio at time $t+dt$:
\begin{enumerate}
	\setcounter{enumi}{2}
	\item if there is a default (i.e., $J_{t+dt}=1$), then each of the $\tilde{B}_t$
	CDS contracts pays 1;
	\item  if there is no default (i.e., $J_{t+dt}=0$), then he sells the bond for
	$\tilde{B}_{t+dt}$;
	\item either way, he pays the premium leg $r^{CDS} dt$ for each of the $\tilde{B}_t$ CDS contracts and
	pays back the loan to the treasury, which amounts to $\tilde{B}_t(1+f_t\, dt)$.
\end{enumerate}
The overall gain over the time interval $(t,t+dt)$ is
\[
\tilde{B}_t\I_{\{J_{t+dt}=1\}}+\tilde{B}_{t+dt}\I_{\{J_{t+dt}=0\}}
-\tilde{B}_t r^{CDS} dt-\tilde{B}_t(1+f_t\, dt) . 
\]
Equating this to zero to ensure replication and using
the fact that the first indicator above is just $dJ_t$ and that
we assumed $J_t=0$ (no default at time $t$), we obtain the dynamics for $B$
\be
dB_t-B_t(r^{CDS} + f_t)\,dt+B_t\, dJ_t=0 ,  \label{noarb}
\ee
and thus, since $B_T = \I_{\{\tau> T \}}$, we have for all $t \in [0,T]$
\be
B_t=\I_{\{\tau>t\}} e^{-\int_t^T(r^{CDS}+f_u)\, du} . \label{bondp}
\ee
Related dynamics were derived in \cite[Eq. (2.4)]{EJY2000} using different arguments and without addressing credit default swaps.
\ssc{Assumptions Underpinning Replication Arguments}  \label{default}

The above computations did not require the exact knowledge of a specific
distribution of a random time modelling the default event. Nevertheless, certain
conditions need in fact to be imposed on the default time for the replication argument to be valid.
To explain why additional assumptions are needed, let us denote the CDS price process by $S(\kappa )$ with $\kappa =r^{CDS}$
and let the treasury rate be a constant $f>0$.

\begin{pro}\label{distrib}
	The above replication of the defaultable bond holds whenever the probability distribution of $\tau $ is continuous and its support includes $[0,T]$.
\end{pro}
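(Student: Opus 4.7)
The plan is to show that the two distributional hypotheses---continuity of the law of $\tau$ and $[0,T]\subseteq\mathrm{supp}(\tau)$---are exactly what is needed to turn the informal cash-flow argument that produced \eqref{noarb} and \eqref{bondp} into a rigorous statement.

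First I would set up the canonical default filtration $\hh=(\H_t)_{t\ge 0}$ generated by $J$ and introduce the hazard function $\Gamma(t):=-\ln\P(\tau>t)$. The support condition gives $\P(\tau>t)>0$ for every $t\in[0,T]$, so $\Gamma$ is finite on $[0,T]$; continuity of the law of $\tau$ makes $\Gamma$ continuous, and hence $\tau$ is a totally inaccessible $\hh$-stopping time with continuous compensator $\Lambda_t=\Gamma(t\wedge\tau)$. In particular $M_t:=J_t-\Lambda_t$ is an $\hh$-martingale, which is the object that legitimises the heuristic identification $\I_{\{J_{t+dt}=1\}}\approx dJ_t$ used in the step-by-step derivation in the text.

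Next I would check that under the same two assumptions the CDS postulates are internally consistent: since $\Gamma(T)<\infty$, both legs of the CDS have finite expected value, the equation defining the market spread has a unique finite solution $r^{CDS}$, and conditioning on survival up to $t$ is meaningful for every $t\in[0,T]$. Thus the support condition delivers the existence of the par spread at each admissible trading time, while the continuity condition rules out atoms in the premium-leg cash flow (so that the contribution $\tilde B_t\, r^{CDS}\,dt$ in the portfolio computation is genuinely first-order in $dt$ and nothing is missed by the naive Riemann-style bookkeeping).

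Finally I would verify the bond dynamics directly. Setting $\tilde B_t:=\exp\bigl(-\int_t^T(r^{CDS}+f_u)\,du\bigr)$ and $B_t:=(1-J_t)\tilde B_t$, integration by parts for the finite-variation jump process $J$---which carries no quadratic-covariation correction because $\Lambda$ and $\tilde B$ are continuous---gives
\[
dB_t=(1-J_{t-})(r^{CDS}+f_t)\tilde B_t\,dt-\tilde B_t\,dJ_t=B_{t-}(r^{CDS}+f_t)\,dt-B_{t-}\,dJ_t,
\]
which is exactly \eqref{noarb}; formula \eqref{bondp} is then its unique solution with terminal value $\I_{\{\tau>T\}}$. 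The main obstacle, I expect, is not the computation itself but disentangling the two distinct roles played by the two hypotheses: continuity kills atoms in both the compensator and the CDS premium leg (so the $dt$-level cancellation in the replication argument is justified), while the support condition keeps the pre-default price $\tilde B_t$ and the par spread well-defined throughout $[0,T]$ (so that a trade is actually available at every $t<T$). Losing either assumption breaks the replication at a different stage of the argument.
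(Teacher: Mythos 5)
There is a genuine gap: your argument never actually performs the replication that the proposition asserts. Your final step only verifies that the candidate price $B_t=(1-J_t)\tilde B_t$ satisfies the SDE \eqref{noarb}; it does not construct, or pin down, a self-financing strategy $\phi=(\phi^1,\phi^2)$ in the CDS $S(\kappa)$ and the funding account $B^f$ whose value replicates the bond. This matters because the place where the support hypothesis does its real work in the paper is precisely inside that construction: the zero-recovery condition $V_\tau(\phi)=0$ $\P$-a.s.\ forces the jump-matching identity $\phi^1_\tau=-\wt V_{\tau-}(\phi)$, and this $\P$-a.s.\ statement at the random time $\tau$ translates into the pointwise condition $\phi^1_t=-\wt V_{t-}(\phi)$ for (Lebesgue-a.e.) \emph{all} $t\in[0,T]$ only when the law of $\tau$ is continuous with $[0,T]$ in its support. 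On any interval $(t_1,t_2)$ with $\P(\tau\in(t_1,t_2))=0$ the CDS position is unconstrained (one should set $\phi^1_t=0$ there), and the ODE $d\wt V_t(\phi)=(\kappa+f)\wt V_t(\phi)\,dt$ that yields $\wt B_t=e^{-(\kappa+f)(T-t)}$ simply does not follow. Your proposal attributes the role of the support condition to finiteness of $\Gamma$ and well-definedness of the par spread, which is not where the argument would actually break.

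A second, related omission: the paper's first step shows that the standing postulate of a \emph{time-constant} market spread $\kappa(t,T)=r^{CDS}$ with zero pre-default CDS value is itself only consistent (via formula \eqref{ekappa}) when the default intensity under the pricing measure is constant on $[0,T]$, which in turn requires the real-world law of $\tau$ to be continuous with $[0,T]$ in its support. You only check that a finite par spread exists at each date, not that the \emph{same} constant can be the fair spread at every $t$, which is the modeling assumption underlying \eqref{cds1}. Your observations about total inaccessibility of $\tau$ and the martingale $M=J-\Lambda$ are correct but are not needed for the paper's pathwise replication argument and do not substitute for the two missing steps above.
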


\begin{proof}
	According to our assumptions, the process $B^f$ satisfies $B^f_t = e^{ft}$ and the CDS price process
	jumps from zero to one at default and afterwards grows at the treasury rate
	\be \lab{cds1}
	S(\kappa ) = \I_{\{ t \geq \tau \}} e^{f(t- \tau )}.
	\ee
	An essential assumption in this step is that the fair spread is constant. Since we should ensure that the
	model with two assets, $B^f$ and $S(\kappa )$, is arbitrage-free, we postulate that there exists a probability
	measure $\mathbb{Q}$, equivalent to the real-world probability and such that $S_t(\kappa )$ is computed using the risk-neutral
	valuation under $\mathbb{Q}$. Standard computations show that, for a fixed spread $\kappa $, the equality $S_t(\kappa )=0$ may hold
	before default if and only if the intensity of default under $\mathbb{Q}$ is constant on $[0,T]$
	(see, for instance, Section 2.4.2 in \citet{BJR2009} or equation \eqref{ekappa}). Consequently, the probability distribution
	of default time under the real-world probability is continuous with a positive density on $[0,T]$, so that this interval
	is included in the support of distribution of $\tau $.
	
	In the second step, we will show that the postulate that the interval $[0,T]$ is included in the support of the
	real-world probability distribution of $\tau $ is also required for the replication argument to be strict.
	To this end, let us consider a self-financing trading strategy $\phi = (\phi^1, \phi^2)$ in assets $S(\kappa )$ and $B^f$,
	which is stopped at time $\tau \wedge T$ and replicates the defaultable bond, meaning that
	$V_{\tau \wedge T} (\phi ) = B_{\tau \wedge T} = \I_{\{ \tau > T\}}$.  A strategy $\phi = (\phi^1, \phi^2)$ is self-financing if its value
	\be \lab{cds2}
	V_t(\phi ) := \phi^1_t S_t(\kappa ) +  \phi^2_t B^f_t
	\ee
	satisfies
	\[
	dV_t(\phi )  = \phi^1_t \big(  dS_t(\kappa ) - \kappa \, d(t \wedge \tau ) \big) + \phi^2_t \, dB^f_t
	\]
	since the CDS pays negative dividends at the constant rate $r^{CDS} = \kappa $ up to time $\tau $.
	For the pre-default value of $\phi $, denoted as $\wt V(\phi )$, from \eqref{cds1} and \eqref{cds2}, we obtain $\wt V_t(\phi ) = \phi^2_t B^f_t $.
	Therefore,
	\be \lab{cond1b}
	dV_t(\phi )  = - \kappa \phi^1_t \, dt + \phi^2_t \, dB^f_t =  - \kappa \phi^1_t \, dt + f \wt V_t(\phi ) \, dt .
	\ee
	It is worth noting that our computations so far do not depend on the contingent claim we aim to replicate.
	
	In the last step, we specialize our trading strategy to zero-recovery bond by postulating that $V_{\tau } (\phi )= 0, \P$-a.s.
	This entails the jump of $V(\phi )$ at $\tau $ satisfies
	\begin{eqnarray*}
		\Delta_{\tau } V(\phi ) = V_{\tau } (\phi )-V_{\tau -}(\phi )= - V_{\tau -} (\phi )
		= - \wt V_{\tau -} (\phi ) 
		= \phi^1_{\tau } \Delta_{\tau } S(\kappa ) =  \phi^1_{\tau },
	\end{eqnarray*}
	which in turn leads to the following condition (which formally holds a.e. with respect to the Lebesgue measure)
	\be  \label{cond2a}
	\phi^1_t = - \wt V_{t-} (\phi ), \quad \forall t \in [0,T].
	\ee
	Then the property that condition  \eqref{cond2a} holds a.e. is equivalent to the equality $\phi^1_{\tau }=- \wt V_{\tau } (\phi )$, provided
	that the distribution of $\tau $ under $\P$ is continuous and $[0,T]$ is included in its support. Then the replicating strategy is
	independent of a particular distribution of $\tau $ satisfying these conditions. In other words, the exact knowledge of this distribution
	is immaterial for the problem at hand.
	
	Let us stress that equality \eqref{cond2a} should not be postulated a priori when allowing for more general distributions of default time.
	For instance, when $\P ( \tau \in (t_1,t_2 ) ) =0$, then we should set $\phi^1_t = 0$ for all $t \in (t_1,t_2)$ (see also the analysis in \citet{R1999}
	for the discontinuous case).
	
	We are now in a position to derive explicitly the replicating strategy.  By combining  \eqref{cond1b} with \eqref{cond2a}, we obtain
	\[
	d\wt V_t(\phi ) = (\kappa + f) \wt V_t(\phi ) \, dt
	\]
	with the terminal condition $\wt V_T(\phi )= 1$. We recover \eqref{noarb}
	\[
	\wt V_t (\phi ) = e^{- (\kappa + f)(T-t)} = \wt B_t
	\]
	and so
	\[
	B_t = \I_{\{ \tau >t \}} \, e^{- (\kappa + f)(T-t)} = \I_{\{ \tau >t \}}\, e^{- (r^{CDS} + f)(T-t)} .
	\]
	We have thus shown that the strategy
	\[
	\phi^1_t = -  e^{- (\kappa + f)(T-t)}, \quad \phi^2_t = (B^f_t)^{-1} e^{- (\kappa + f)(T-t)}
	\]
	is self-financing and replicates the defaultable bond $B$ on $[0, \tau \wedge T]$.
\end{proof}

\ssc{No-arbitrage and the Martingale Method} \label{marti}

An arbitrage-free pricing model for the CDS and the defaultable bond can also be constructed using directly
the so-called martingale mathod. In this modeling approach, one may take {\bf any} probability measure $\Q$ equivalent to $\P$ as a
{\bf postulated} martingale measure. To identify a martingale measure in our set-up, we require that $\Q$ should be consistent with the assumed properties of the CDS:  the spread equals $\kappa(t,T) >0 $ and the CDS pays one unit of cash at the moment of default,
provided that the default event occurs prior to or at $T$. As usual, the market CDS should have the value equal to zero
at any time before the default event.

Simple computations show that, in general, the {\bf market} (or {\bf fair}) spread of the CDS can be computed from the following formula
when the interest rate $f$ is constant
\begin{equation} \label{ekappa}
\kappa (t,T) = - \frac{\int_{(t,T]} e^{-fu}\, dG(u)}{ \int_{(t,T]} e^{-fu} G(u) \, du}
\end{equation}
where $G(t) := \Q(\tau >t)$.  As already mentioned,  it is now possible to prove that
the necessary and sufficient condition for the possibility of having a constant positive fair CDS rate $\kappa (t,T) = \kappa >0 $
is that the distribution of $\tau $ under $\P$ is continuous and has the support $[0,T]$,
exactly what was assumed in Proposition \ref{distrib}. In essence, this is due to the fact that for any positive density function on $[0,T]$ there exists a unique probability measure $\Q$ equivalent to $\P$ such that the distribution of $\tau $ under $\Q$ is exponential with parameter $\lambda = \kappa $ where a constant $\kappa >0$ is given in advance, provided that the interest rate $f$ is constant (notwithstanding the level of $f$). Then $\kappa (t,T) = \kappa $ and $\lambda$ has a natural interpretation as the default intensity under the CDS pricing measure $\Q$. The converse implication is valid as well.

Since we may show that the martingale measure $\Q$ is unique on ${\cal H}_T$,  where  ${\cal H}_t :=\sigma(\I_{\{\tau\leq u\}}, u\leq t)$ is the filtration generated by the default process,  the model with two assets, the funding account and the CDS, is complete.
Hence, from the Fundamental Theorem of Asset Pricing, any contingent claim $X$ maturing at $T$ can be replicated and its price,
which is defined as the value of a replicating strategy, can be also computed using the following version of the classical risk-neutral valuation formula
\be \label{eqrnv}
V_t = B^f_t \, \EQ \left( \frac{X}{B^f_T} \, \Big| \, \H_t \right).
\ee
For a claim with zero recovery, we obtain
\[
V_t = B^f_t \, \EQ \left( \frac{X \I_{\{\tau > T\}}}{B^f_T} \, \Big| \, \H_t \right),
\]
which reduces to
\[
V_t = \I_{\{ \tau >t \}}\, B^f_t (G_t)^{-1} \, \EQ \left( \frac{X G_T }{B^f_T} \right)
\]
where
\[
G_t = \Q ( \tau > t ) = e^{-\lambda t} = e^{-\kappa t}.
\]
The defaultable bond corresponds to $X=1$ and thus
\[
B_t = \I_{\{ \tau >t \}}\, B^f_t (G_t)^{-1} \frac{G_T }{B^f_T} = \I_{\{ \tau > t \}}\, e^{- (\kappa + f)(T-t)} .
\]
One may observe that the only claims with zero recovery in this model are zero-recovery bonds with differing, but constant, face values. Obviously, the valuation problem for claims with non-zero recovery can also be solved using \eqref{eqrnv}.

\sect{Vulnerable Call Option Pricing by Replication}
\label{sec3}

After a detailed analysis of valuation of the zero-recovery defaultable bond, we will now address a more advanced
problem of valuation of vulnerable options on some risky asset. Once again, our goal is to compare various approaches
and to identify the underlying assumptions, which are frequently ignored in the existing literature.

Denote by $\mathbb{F}= ({\cal F}_t)$ where ${\cal F}_t  :=\sigma(S_u, u\leq t)$ the natural filtration
generated by the price process of a traded asset (stock). Let the maturity date $T$ be fixed
and let $X$ be an ${\cal F}_T$-measurable integrable random variable.
Assume that the default time $\tau$ is a positive random variable on the
probability space $(\Omega, {\cal F},\P)$. The default time generates a
filtration $\hh = ({\cal H}_t)$ where  ${\cal H}_t :=\sigma(\I_{\{\tau\leq u\}}, u\leq t)$, which is used to progressively enlarge
$\ff $ in order to obtain the full filtration $\gg =({\cal G}_t)$ where ${\cal G}_t :={\cal F}_t\vee {\cal H}_t$.
We work under the assumption that $F_t :=\P (\tau\leq t\,|\,{\cal F}_t)$ is a continuous, increasing
function and $F_t<1$ for any $t$. Note that this assumption on the default time
has already appeared in \citet{EJY2000} in conjunction with the hypothesis (H) and is in line with what was assumed in Proposition \ref{distrib}.

Let $A$ be a {\bf contract} (vulnerable call option) that costs $P_0$ at time $0$ and has the payoff $X$ at maturity time $T$ where
\[
X=\I_{\{\tau>T\}}(S_T-K)^+ .
\]
Here $\tau$ is interpreted as the default time of the counterparty to the contract, that is, the issuer of the option.
We wish to find the {\bf price} $P_t$, $t\in[0,T]$, of this contract
for an investor who replicates a long position using financial instruments available in the market.

We now consider a market with the following {\bf primary assets} $(A^1,A^2,A^3,A^4)$:
\begin{enumerate}[i)]
	\item an unsecured funding account with the interest rate $f$;
	\item a stock (the underlying asset of the contract);
	\item a repo agreement on the stock with the repo rate ${h}$;
	\item a zero-recovery defaultable bond with  the rate of return $r^C$ issued by the counterparty.
\end{enumerate}
At time $t$, the price $P_t^i$ of the asset $A^i$ is given by
\[
P_t^1=B_t^f,\;\; P_t^2=S_t,\;\; P_t^3=0,\;\; P_t^4=B_t
\]
and the gains process since inception of $A^i$ is denoted by $G_t^i$ with $G^i_0=0$ for all $i$.

As a preliminary step, we specify the model inputs: the treasury rate $f$, the repo rate $h$ and the bond rate of return $r^C$.
Note that the rates $f,h$ and $r^C$ are postulated to be constant (or, at least, deterministic) and they are known.
We assume also that the process $S$ is continuous (obviously, $B^f$ is continuous as well).
We will later assume, in addition, that the stock price volatility $\sigma $ is known as well.
Hence we seek for the pricing formula in terms of the model parameters $f,h,r^C$ and $\sigma $  and
the option data: $T$ and $K$.

Note that, in principle, all these quantities are observed in the market, provided that the volatility is understood as the implied volatility.
By contrast, we do not need to assume that the CDS on the counterparty is traded, although this postulate would not change
our derivation of the option pricing formula, and the knowledge of the CDS spread $r^{CDS}$ is immaterial. In fact, we know from the
preceding section that, for a fixed level of the treasury rate $f$, there is one-to-one correspondence between $r^C$ and $r^{CDS}$.

Let us now determine the gains processes.
Buying one repo contract amounts to selling the shares of stock against
cash, under the agreement of repurchasing them back at the higher price
that includes the interest payments corresponding to the repo rate.
(Selling the repo results in the opposite cash flows.)
Any appreciation (or depreciation) in the stock price is part of the positive (or negative) gains,
while the outgoing repo interest payments are negative gains: $dG_t^3= dS_t-{h}S_t\, dt$.

Under the standing assumption that the pre-default rate of return $r^C$ on the counterparty's bond is deterministic, we obtain
\[
B_t = \I_{\{\tau > t\}} \, e^{- \int_t^T r^C_u \,  du} =  (1- J_t)e^{- \int_t^T r^C_u \,  du}
\]
where $J_t :=  \I_{\{\tau \leq t\}}$ is the point process that models the jump to default of the counterparty. The gains have negative terms
for outgoing cash flows corresponding to the drop in the bond value at the time of default. To summarize,
the gains of primary assets are given by
\begin{eqnarray} \label{gainn}
dG_t^1=fB_t^fdt,\;\; dG_t^2= dS_t,\;\; dG_t^3= dS_t-{h}S_t\, dt,\;\;  
dG_t^4= r_t^CB_t\, dt-B_{t-} \, dJ_t .\nonumber
\end{eqnarray}

A {\bf trading strategy} $\varphi =(\varphi^1,\varphi^2,\varphi^3,\varphi^4)$
gives the number of units of each primary asset purchased to build a portfolio.
Let $\beta\in[0,1]$ be a constant.  A trading strategy $\varphi$ is {\bf admissible} if at any date $t$
the investor can only use the repo market for a fraction $\beta$ of the stock amount required
and the rest has to be obtained in the stock market with funding from the treasury.
The {\bf wealth} at time $t\in[0,T]$ of the portfolio resulted from an admissible strategy $\varphi$ is denoted by $V_t^\varphi$ and equals
\[
V_t^\varphi= \sum_{i=1}^4 \varphi_t^i P_t^i
\]
and the gains process associated with this strategy satisfies $G_0^\varphi=0$ and
\begin{equation}
dG_t^\varphi:=\sum_{i=1}^4 \varphi_t^i\, dG_t^i .\label{gains_strategy}
\end{equation}
We then say that a strategy $\varphi$ is {\bf self-financing} if for all $t \in [0,T]$
\begin{equation}
V_t^\varphi=V_0^\varphi+G_t^\varphi.  \label{sf}
\end{equation}
An admissible trading strategy $\varphi$ {\bf replicates} the payoff of a
contract $A$ if $V_T^\varphi=X$. We define the time $t$ price of a contract $A$ as the wealth
$V_t^\varphi$ of the portfolio corresponding to the replicating strategy
\be
P_t:=V_t^\varphi. \label{price_t}
\ee
The existence of the specific primary assets in our market ensures that
any claim is {\bf attainable}. In fact, the market under study is complete and no-arbitrage arguments show that the price of any contract is unique.

Recall that the replicating portfolio is the negative
of the hedging portfolio that an investor holding the contract $A$ would
build to protect against market and counterparty risks.
In other words, the replicating strategy replicates not only the payoff of the option, but also
the market risk and the credit risk profiles of a long position in the option.
At date $t$ before default,  the investor builds a replicating portfolio for a long position in
the option knowing that the assumptions on $\tau$ imply that default may occur
between $t$ and $t+dt$ for an arbitrarily small $dt$. To replicate the contract the investor:
\begin{enumerate}
	\item buys $\beta\Delta_t$ repos, borrows $\beta \Delta_t S_t$ from treasury to buy and deliver $\beta\Delta_t$ shares, and receives
	$\beta \Delta_t S_t$ cash which is paid back to treasury;
	\item borrows $(1-\beta)\Delta_tS_t$  from treasury and buys $(1-\beta)\Delta_t$ shares;
	\item buys $P_t/B_t$ units of the counterparty bond in order to match the value of this portfolio and the option payoff.
\end{enumerate}
Of course, at this moment the option price $P_t$ is yet unknown, but it will be found from the matching condition \eqref{price_t} combined
with the terminal payoff $X$.  This replicating portfolio produces the following admissible strategy
\begin{equation}
\theta_t :=\Bigg{(}-\frac{
	(1-\beta)\Delta_tS_t}{B_t^f},(1-\beta)\Delta_t,\beta\Delta_t,\frac{P_t}{B_t}\Bigg{)}.
\label{strategy}
\end{equation}

At time $t+dt$ the investor:
\begin{enumerate} \setcounter{enumi}{3}
	\item receives $\beta\Delta_t$ shares from repo and sells them for $\beta\Delta_tS_{t+dt}$;
	\item borrows from treasury $\beta\Delta_tS_{t}(1+hdt)$ to close the repo;
	\item sells $(1-\beta)\Delta_t$ shares for $(1-\beta)\Delta_tS_{t+dt}$;
	\item sells the counterparty's bond for $P_t B_{t+dt}/B_t$;
	\item pays back to the treasury the amount $(1-\beta)\Delta_tS_t(1+fdt)$.
\end{enumerate}
The change in the wealth of the replicating position resulting from these transactions equals
\begin{align*}
V_{t+dt}^\theta-V_t^\theta
&=\beta\Delta_tS_{t+dt}-\beta\Delta_tS_t(1+hdt)+(1-\beta)\Delta_tS_{t+dt}
+\frac{P_t}{B_t}\, dB_t- (1-\beta)\Delta_tS_{t}(1+fdt)\nonumber\\
&=\beta\Delta_t\,dS_t-\beta h \Delta_t S_t\, dt+(1-\beta)\Delta_t\, dS_t
+\frac{P_t}{B_t}\, dB_t
-(1-\beta)f\Delta_tS_t\, dt\nonumber\\
&=\Delta_t\, dS_t - \big( (1-\beta)f + \beta h \big) \Delta_tS_t\, dt+P_t(r^Cdt-dJ_t).
\end{align*}
This can be derived formally by using \eqref{gainn} and computing the gains process (\ref{gains_strategy}) associated with the portfolio $\theta$ given by \eqref{strategy}
\begin{align*}
dG_t^\theta&= -\frac{(1-\beta)\Delta_t S_t}{B_t^f}fB_t^f dt
+(1-\beta)\Delta_t \, dS_t
+\beta\Delta_t (dS_t-hS_t\, dt)
+\frac{P_{t}}{B_t} (r_t^CB_t\, dt- B_{t-}\, dJ_t)\nonumber\\
&=\Delta_t\, dS_t- \big( (1-\beta)f+\beta h \big) \Delta_tS_t\, dt+P_t (r_t^C dt-dJ_t)
\label{gains_theta}
\end{align*}
where we used the equality $B_{t-}=B_t$, which holds before default. Note also that the wealth of $\theta $ at default equals zero, which is consistent
with the option payoff at default. Consequently, we may and do set $\theta_t = (0,0,0,0)$ for $t> \tau $.

Let us now focus on the pricing problem before default.
Since $dV_t^\theta=dG^\theta_t$ (from \eqref{sf}) and $dP_t=dV_t^\theta$ (from \eqref{price_t}), we have
\begin{equation}
dP_t= \Delta_t\,dS_t - \big( (1-\beta)f+\beta h \big) \Delta_tS_t\,dt+P_t(r_t^Cdt-dJ_t). \label{price_eq}
\end{equation}
To derive the pre-default pricing PDE, we assume that under the statistical probability $\P$ the stock price is governed by
\[
dS_t = \mu_t S_t \, dt + \sigma S_t \, dW_t
\]
and the price $P_t$ can be expressed as
\[
P_t= \I_{\{\tau>t\}} \tilde P_t = \I_{\{\tau>t\}} v(t,S_t) = (1- J_t) v(t,S_t)
\]
for some function $v(t,s)$ of class ${\cal C}^{1,2}$. Then the Ito formula yields
\begin{eqnarray*}
	dP_t=(1-J_t)\, dv (t,S_t)+ v(t,S_t) \, d(1-J_t) 
	= (1-J_t)\, dv (t,S_t)- v(t,S_t) \, dJ_t
\end{eqnarray*}
and
\begin{eqnarray}
dP_t =(1-J_t)\Big{(}v_t(t,S_t) +\frac{\sigma^2S_t^2}{2}v_{ss}(t,S_t) \Big{)}dt 
+(1-J_t)v_s(t,S_t)\, dS_t-v(t,S_t)\, dJ_t.
\label{product1}
\end{eqnarray}
By equating the $dS_t, dt$ and $dJ_t$ terms in \eqref{price_eq} and \eqref{product1}, we obtain the following equalities in which
the variables $(t,S_t)$ were suppressed
\begin{align}
\Delta_t&=(1-J_t) v_s, \nonumber \\
(1-J_t)&\Big{(}  v_t+\frac{\sigma^2S_t^2}{2}\, v_{ss} + \big((1-\beta)f + \beta h\big) S_t v_s\Big{)}
-(1-J_t)r_t^C v=0, \label{PDE1}\\
-P_{t}\, dJ_t&=-v \, dJ_t. \nonumber
\end{align}
The pre-default pricing PDE for the function $v(t,s)$ is now obtained from \eqref{PDE1} as
\begin{equation}
v_t+ \big( (1- \beta )f +\beta h \big) s\, \frac{\partial v}{\partial s}+\frac{\sigma^2 s^2}{2} \frac{\partial v^2}{\partial s^2}- r_t^C v = 0 \label{PDE}
\end{equation}
with terminal condition $v(T,s)=(s-K)^+$. One recognizes \eqref{PDE} as the Black-Scholes PDE when the underlying stock pays dividends.
To see this, it suffices to take the discount rate to be the return on the defaultable bond
$r:=r^C$ and the instantaneous dividend yield to be the bond spread over the effective funding rate:
$q:=r^C - f^{\beta }$  where the {\bf effective funding rate} is defined as the weighted average: $f^{\beta }:= (1-\beta)f + \beta h$.
We conclude that the following result is valid.

\begin{pro} \label{prirep}
	The time $t$ price of the vulnerable call option obtained by replication equals
	\be
	P_t=\I_{\{\tau>t\}}\Big{(}S_t e^{-q(T-t)}N(d^q_1)-Ke^{-r^C(T-t)}N(d^q_2)\Big{)}
	\label{final_price}
	\ee
	with $q=r^C - f^{\beta}$ and
	\[
	d^q_1= \frac{\log\frac{S_t}{K}+(r^C-q+\frac{\sigma^2}{2})(T-t)}{\sigma\sqrt{T-t}},
	\;\; d^q_2=d^q_1-\sigma\sqrt{T-t}.
	\]
\end{pro}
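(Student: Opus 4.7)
The plan is to recognize equation \eqref{PDE} as a reparametrized Black--Scholes PDE for a call written on a dividend-paying stock, to invoke the classical closed-form solution, and to reinstate the indicator $\I_{\{\tau>t\}}$ coming from the pre-default ansatz $P_t = \I_{\{\tau>t\}} v(t,S_t)$ used in the derivation. Since \eqref{PDE} together with the terminal condition $v(T,s)=(s-K)^+$ has already been derived by replication, the remaining task is purely to identify parameters and exhibit $v$.

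First I would rewrite \eqref{PDE} in the canonical form
\be
v_t + (r-q)\, s\, v_s + \tfrac{1}{2}\sigma^2 s^2 v_{ss} - r\, v = 0,
\ee
via the identification $r:= r^C$ and $q:= r^C - f^{\beta}$, where $f^{\beta} := (1-\beta)f+\beta h$. This is exactly the Black--Scholes PDE for a European option on a stock paying a continuous dividend yield $q$ and discounted at rate $r$; the key observation is that the coefficient of $s\, v_s$ in \eqref{PDE} equals $f^{\beta}$, which must play the role of $r-q$, while the zero-order coefficient $-r^C$ dictates the discount rate. Any other decomposition of the drift and discount pair would be inconsistent with \eqref{PDE}.

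Next I would quote the classical closed-form solution of the dividend Black--Scholes PDE with payoff $(s-K)^+$, namely
\be
v(t,s) = s\, e^{-q(T-t)} N(d^q_1) - K e^{-r(T-t)} N(d^q_2),
\ee
with $d^q_1,d^q_2$ as in the statement of Proposition \ref{prirep}; substituting $r=r^C$ and $q = r^C - f^{\beta}$ gives the bracketed expression in \eqref{final_price}. Multiplying by $\I_{\{\tau>t\}}$, which is forced by the pre-default representation and by the convention $\theta_t=(0,0,0,0)$ for $t>\tau$, yields \eqref{final_price} on all of $[0,T]$.

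There is no real obstacle: the only step that requires a bit of care is the parameter matching, namely checking that the effective funding rate $f^{\beta}$ enters as $r-q$ rather than as the discount rate itself, so that the stock term in the formula carries the discount factor $e^{-q(T-t)} = e^{-(r^C-f^{\beta})(T-t)}$ while the strike term carries $e^{-r^C(T-t)}$. Once this matching is made, appealing to the standard Black--Scholes-with-dividends formula closes the proof.
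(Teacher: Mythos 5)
Your proposal is correct and follows essentially the same route as the paper: the paper likewise derives Proposition \ref{prirep} by recognizing \eqref{PDE} with terminal condition $v(T,s)=(s-K)^+$ as the Black--Scholes PDE with dividends, matching $r=r^C$ and $r-q=f^{\beta}$ so that $q=r^C-f^{\beta}$, and reading off the closed-form solution before reinstating the indicator $\I_{\{\tau>t\}}$. (The paper also sketches an alternative probabilistic derivation via a measure under which $S$ has drift $f^{\beta}$, but that is presented as a supplementary remark, not the main argument.)
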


It is worth noting that \eqref{final_price} may also be derived from \eqref{price_eq} through probabilistic means without resorting to the pricing PDE.
From  \eqref{price_eq}, we obtain the following equation for the pre-default price $\tilde P$
\begin{equation}
d\tilde P_t=-  f^{\beta } \Delta_tS_t\,dt+\Delta_t\,dS_t+ r_t^C \tilde P_t\, dt . \label{prce_eq}
\end{equation}
Let now $\Qhf$ be the probability measure, which is equivalent to $\P$, and such that the drift of the risky asset $S$ under $\Qhf$ is equal to the effective funding
rate $f^{\beta }$. Then the process $\tilde P$ is governed under $\Qhf$ by
\[
d\tilde P_t -r _t^C \tilde P_t\, dt = \Delta_t \sigma S_t \,dW^{\beta}_t
\]
with terminal condition $\tilde P_T = (S_T-K)^+$ where $W^{\beta}$ is the Brownian motion under  $\Qhf$.
This leads to the following probabilistic representation for $\tilde P_t$
\begin{eqnarray*}
	\tilde{P}_t&=&e^{-r^C (T-t)} \, \Ehf [(S_T-K)^+\,|\,{\cal F}_t]\\
	&=&e^{-(r^C - f^{\beta } )(T-t)} \, \Ehf[e^{- f^{\beta }(T-t)}(S_T-K)^+\,|\,{\cal F}_t],
\end{eqnarray*}
which in turn yields  \eqref{final_price} through either standard computations of conditional expectation or by simply noting that it is given
by the Black-Scholes formula with  the interest rate $f^{\beta }$ and no dividends.

\begin{rems}  \label{remar1} {\rm
		i) If we model the defaultable bond as in \eqref{bondp} with
		$r=r_t^C=r^{CDS}+f$ where the CDS spread $r^{CDS}$ (rather than the bond return $r^C$) is taken as a model's input, then the pricing equation \eqref{final_price}
		holds with $q :=r^{CDS}-\beta (h-f)$. In other words, the option pricing formula \eqref{final_price} is still valid when the defaultable bond
		is replaced by the counterparty's CDS in our trading model.
		
		\noindent  ii)
		PDE \eqref{PDE} is in fact equivalent to PDE (32) obtained in \cite[Eq. 4.4]{BJR2005}
		using the martingale approach. To see this, it suffices to rewrite \eqref{PDE} with the dynamics of the primary assets
		\begin{align*}
		&dB_t^f =fB_t^fdt, \\
		&dS_t =\mu S_t\,dt+\sigma S_t\,dW_t , \\
		&dB_t =B_{t-}(\mu_3\,dt-dM_t) =B_{t-} \big( (\mu_3+\xi_t)dt-dJ_t \big),
		\end{align*}
		where $\mu_3=f$ and $M_t := J_t + \log{G_{t\wedge\tau}} = J_t - \xi_t$ where  $G_t:=\P (\tau>t|{\cal F}_t)$. The process $M$
		is commonly known as the compensated $\gg$-martingale of the default process $J$.
		
		\noindent  iii) Though the stock $S$ was assumed to pay no dividends, the present framework can be
		easily extended to the dividend-paying case. As a result, the effective funding rate $f^{\beta }$
		should be replaced by $f^{\beta } - \delta $.
	}
\end{rems}

\sect{Vulnerable Call Option Pricing by Adjusted Cash Flows Approach}
\label{sec4}

Let us consider again the problem of pricing the same vulnerable option, but
this time using the {\bf adjusted cash flows approach} originated in \citet{PPB11},
derived rigorously in \citet{BFP15} and presented in a wider context in
\citet{BP14}. We do not make here an attempt to justify their approach,  but we start instead with
the pricing equation (11) of \citet{BP14} and adapt it to
the present context of a vulnerable call option, which is an uncollateralized contract.
Note that the variation margin is $M$, while $N^C$ and $N^I$ are
the initial margin accounts for the two counterparties, resulting in the total
collateral account $C=M+N^C+N^I$. In our case, this means that all terms appearing in the last two
lines of equation (11)  in \citet{BP14} vanish. Moreover, the cash flow at default equals
zero (due to zero recovery convention for the vulnerable option) and the promised cash flow over time period $(t,t+dt)$ is
\[
\Pi(t,t+dt)=(S_T-K)^+\I_{\{t=T\}}.
\]
Hence the pricing equation (11) in \citet{PPB11}  reduces to
\begin{equation} \label{eqd}
V_t={E}^h \big[ \I_{\{\tau>T\}}D(t,T;f)(S_T-K)^+\,|\,{\cal G}_t \big]
\end{equation}
where $\Ehh$ is the expectation with respect to the probability measure $\Qhh$ that makes the drift
of the risky asset equal to $h$, meaning that
\[
dS_t = h S_t \, dt + \sigma S_t \, dW^h_t
\]
where $W^h$ is a Brownian motion under $\Qhh$. Furthermore, $\mathbb{G} = ({\cal G}_t)$ is the full filtration
that includes the information on default times and the discount factor $D(s,t;f)$ equals
\[
D(s,t;f) := \exp\left( -\int_s^t f_u \, du\right).
\]
We henceforth assume a constant treasury rate $f$ and we use the pre-default intensity
$\lambda$ under $\Qhh$ of the counterparty, which is defined in \cite[(40)]{BP14} by
\[
\I_{\{\tau>t\}}\lambda \, dt:=\Qhh ( \tau\in dt\,|\, \tau>t,{\cal F}_t ),
\]
to obtain the survival probability $G^h_t=e^{-\lambda t}$ where $G^h_t:=\Qhh(\tau>t\,|\,{\cal F}_t)$.
Note that this is consistent with the assumptions on $\tau$ in the replication
approach of Section \ref{sec3}. Using \eqref{eqd} and Cor. 3.1.1 of \citet{BJR2004}, we obtain
\[
{V}_t = \I_{\{\tau>t\}} (G^h_t)^{-1} \, \Ehh[D(t,T;f)(S_T-K)^+ G^h_T\,|\,{\cal F}_t].
\]
If $\tilde{V}$ denotes the $\ff$-adapted pre-default price process such that for all $t \in [0,T]$
\[
\I_{\{\tau>t\}}V_t=\I_{\{\tau>t\}}\tilde{V}_t,
\]
then from the above equation we immediately obtain
\[
\tilde{V}_t= (G^h_t)^{-1} \, \Ehh[D(t,T;f)(S_T-K)^+ G^h_T\,|\,{\cal F}_t] .
\]
Since $G^h$ is deterministic, for a constant treasure rate $f$, the pre-default price can be written as
\[
\tilde{V}_t=e^{-(\lambda+f)(T-t)} \, \Ehh[(S_T-K)^+\,|\,{\cal F}_t]
\]
or, equivalently,
\[
\tilde{V}_t = e^{-(\lambda+f-h)(T-t)} \, \Ehh[e^{-h(T-t)}(S_T-K)^+\,|\,{\cal F}_t].
\]
The last expectation can be computed yielding the usual Black-Scholes formula when the drift of the stock equals $h$
\[
\Ehh[e^{-h(T-t)}(S_T-K)^+)\,|\,{\cal F}_t]=S_tN(d_1)-Ke^{-h(T-t)}N(d_2)
\]
where
\[
d_1=\frac{\log{\frac{S_t}{K}+(h+\frac{\sigma^2}{2})(T-t)}}{\sigma\sqrt{T-t}}, \quad
d_2=d_1-\sigma\sqrt{T-t}.
\]
We conclude that the pre-default price process satisfies
\[
\tilde{V}_t= e^{-(\lambda+f-h)(T-t)} \big( S_t N(d_1)-Ke^{-h(T-t)}N(d_2) \big)
\]
and thus
\begin{equation}
V_t= \I_{\{\tau>t \}}  \Big( S_te^{-(\lambda+f-h)(T-t)}N(d_1)-Ke^{-(\lambda+f)(T-t)}N(d_2) \Big).
\label{master2}
\end{equation}
Upon setting $\lambda+f-h=q$ and $\lambda+f=r$, we deduce that \eqref{master2} coincides with
the pricing formula \eqref{final_price} obtained by replication for $\beta=1$. It is also not difficult to show
that $\lambda = r^{CDS}$ (in essence, this is due to the fact that the density of $\Qhh$ with respect to the martingale
measure  $\Q$ introduced in Section 2.2 is $\ff$-adapted). This shows that the adjusted cash flow method and
the replication approach lead to the same price for the vulnerable call option.

\sect{Sensitivity Analysis} \label{sec5}

In the final step, we perform the sensitivity analysis for the vulnerable call option by focussing on
the impact of the rates $f$ and $h$. We leave aside the parameter $r^C$, since in our model the investor
has the freedom to choose a particular combination of funding sources when purchasing shares, as formally
represented by the parameter $\beta \in [0,1]$, but the spread $r^C$ is given by the market and thus it is
natural to assume that it is fixed.


\begin{ex} \label{examp} {\rm
		Figure \ref{plot3d} shows the dependence of the pre-default price \eqref{master2} of a
		vulnerable call on the treasury rate $f$ and repo rate $h$ for $\beta=1$ when $S_t=80$, $K=100$, $\sigma=0.3$, $T-t=0.1$, $r^{CDS}=0.05$.
		The pre-default price of the vulnerable call is decreasing in the
		treasury rate $f$ and increasing in the repo rate $h$.}
	\begin{figure}[ht!]
		\centerline{\includegraphics[scale=0.6]{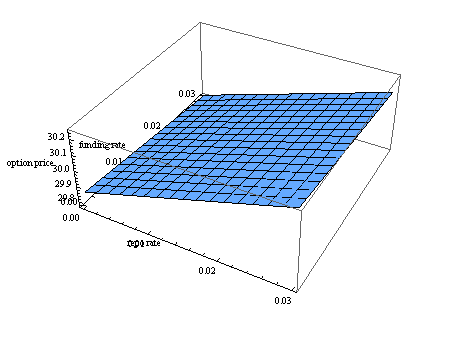}}
		\label{plot3d}
		\caption{
			Option price is increasing in repo rate $h$ and decreasing in funding rate $f$}
	\end{figure}
\end{ex}

To explain the dependence observed in Example \ref{examp} and perform a general sensitivity analysis, we first compute ``funding Greeks" when $\beta =1$, that is, all shares are purchased at repo. We obtain the following expressions
\begin{align}
&\partial_f \tilde{V}_t =\partial_{r^C}\tilde{V}_t=-(T-t)\tilde{V}_t < 0,  \label{pdf}\\
&\partial_h\tilde{V}_t=e^{-(h-f-r^{CDS} )(T-t)}(T-t)S_t N(d_1^q) > 0 , \label{pdh}
\end{align}
which means that the pre-default call price decreases in both the treasury rate $f$ and the bond return $r^C$,
but increases in the repo rate associated with the risky asset. Furthermore, the relative sensitivity to funding $\frac{\partial_f\tilde{V}_t}{\tilde{V}_t}=-(T-t)$
appears to be smaller in absolute value than the relative sensitivity to the repo rate $\frac{\partial_h\tilde{V}_t}{\tilde{V}_t}>T-t$.
This simple benchmark case highlights that the repo rate may have an important impact on the contract
value, often more significant than the treasury rate or the credit spread.

Let us now consider the price obtained in Section \ref{sec3} where the additional parameter $\beta\in[0,1]$ dictates
the structure of the funding arrangements for the investor. In view of \eqref{final_price} and Remark \ref{remar1} i), we obtain the following
funding Greeks:
\begin{align}
\partial_f \tilde{V}_t&=
-\beta (T-t)\tilde{V}_t 
+(1-\beta)(T-t)e^{(\beta (h-f)-r^{CDS})(T-t)}KN(d_2^q),\label{pdfb}\\
\partial_h\tilde{V}_t&=\beta e^{(\beta (h-f)-r^{CDS})(T-t)}(T-t)S_t N(d_1^q)\geq 0,  \nonumber
\end{align}
where the last inequality is strict when $\beta >0$.
In particular, for $\beta=1$ we recover \eqref{pdf}-\eqref{pdh} and for $\beta=0$ (pure treasury funding), we get
\begin{align*}
\partial_f \tilde{V}_t&=(T-t)e^{(f-r^{C})(T-t)}KN(d_2^q)>0, \\
\partial_h\tilde{V}_t&=0,
\end{align*}
where $f-r^C = -r^{CDS}<0$. In general, it is hard to determine the sign of the sensitivity $\partial_f  \tilde{V}_t$ given by \eqref{pdfb}, though it is clear that
it changes from a positive value for $\beta =0$ to a negative one for $\beta =1$.

To give an interpretation of funding Greeks, we observe that the contract's payoff can be written as $X = B_T (S_T-K)^+$, so it can be seen as a hybrid contract which combines the call option on the stock with the long position in the counterparty bond. For any $0< \beta \leq 1$ the
price $ \tilde{V}_t$ increases in $h$, since the cost of hedging the option component $(S_T-K)^+$ is manifestly increasing with $h$.

The price dependence on $f$ is a bit harder to analyze. Indeed, from representation \eqref{strategy} of the hedging portfolio, we see that for $0< \beta <1$  the dependence on $f$ is rather complex: the investor needs to borrow cash from $B^f$ (which grows at the rate $f$) and thus the cost of hedging increases in $f$, but he simultaneously invests in the bond $B$ (with the rate of return $r^C = f+r^{CDS}$ where $r^{CDS} $ is constant) so that the cost of hedging decreases in $f$. The net impact of both legs may be negative, in the sense that the price of the option decreases when $f$ increases. This is rather clear for $\beta =1$, since in that case the investor does not use $B^f$ for his hedging purposes (take $\beta=1$ in \eqref{strategy}) and we see that the cost of hedging the component $B_T$ in the payoff $X$ falls when $f$ increases. By contrast, when $\beta =0$ the value of $h$ is immaterial, and the increase of $f$ makes the option more expensive.
Finally, when only the CDS spread $r^{CDS} $ increases and $f,h$ are kept fixed, then the cost of hedging decreases as well, since the bond $B$ becomes cheaper.

\sect{Conclusions}  \label{sec6}

The mapping of the price computation to the Black-Scholes formula with dividends can be generalized to any local volatility model
(e.g., the displaced diffusion model), which would thus cover both an increasing and a decreasing volatility smile. Stochastic volatility models
would also be attractive to the industry, but any additional source of randomness would need to be hedged, thus requiring the
inclusion of additional hedging instruments to our market model and solving a suitable modification of the pricing PDE.
The local stochastic volatility models currently dominant in the industry would obviously pose the same problem.

In summary, we have shown that two alternative pricing approaches
lead to the same result in the benchmark case of a vulnerable call
option that includes funding, repo and credit risk. This confirms that
even in the presence of funding costs and repo contracts the martingale
method and the adjusted cash flow approach should be seen as
alternative tools facilitating the computation of the replication
price, rather than as alternative pricing paradigms. The reason for
this is that all these approaches are in fact either explicitly or
implicitly based on the concept of replication, as explained
more generally in \citet{BBFPR2015}. Furthermore, we show that the option price can be
expressed as a Black-Scholes formula with dividends, thus facilitating
the use of the funding Greeks in the valuation and sensitivity analyses.
In this context, we highlight the potentially important pricing impact of
the repo rate over the treasury rate and credit spread.
\bibliographystyle{plainnat}

\end{document}